\def\papertitle{An Aliasing-Free Hybrid Digital-Analog Polyphonic Synthesizer}
\def\paperauthorA{Jonas Roth}
\def\paperauthorB{Domenic Keller}
\def\paperauthorC{Oscar Casta\~neda}
\def\paperauthorD{Christoph Studer}
\newcommand{\safemath}[2]{\newcommand{#1}{\ensuremath{#2}\xspace}}
\safemath{\bma}{\mathbf{a}}
\safemath{\bmb}{\mathbf{b}}
\safemath{\bmc}{\mathbf{c}}
\safemath{\bmd}{\mathbf{d}}
\safemath{\bme}{\mathbf{e}}
\safemath{\bmf}{\mathbf{f}}
\safemath{\bmg}{\mathbf{g}}
\safemath{\bmh}{\mathbf{h}}
\safemath{\bmi}{\mathbf{i}}
\safemath{\bmj}{\mathbf{j}}
\safemath{\bmk}{\mathbf{k}}
\safemath{\bml}{\mathbf{l}}
\safemath{\bmm}{\mathbf{m}}
\safemath{\bmn}{\mathbf{n}}
\safemath{\bmo}{\mathbf{o}}
\safemath{\bmp}{\mathbf{p}}
\safemath{\bmq}{\mathbf{q}}
\safemath{\bmr}{\mathbf{r}}
\safemath{\bms}{\mathbf{s}}
\safemath{\bmt}{\mathbf{t}}
\safemath{\bmu}{\mathbf{u}}
\safemath{\bmv}{\mathbf{v}}
\safemath{\bmw}{\mathbf{w}}
\safemath{\bmx}{\mathbf{x}}
\safemath{\bmy}{\mathbf{y}}
\safemath{\bmz}{\mathbf{z}}
\safemath{\bmzero}{\mathbf{0}}
\safemath{\bmone}{\mathbf{1}}
\bmdefine{\biad}{a}
\bmdefine{\bibd}{b}
\bmdefine{\bicd}{c}
\bmdefine{\bidd}{d}
\bmdefine{\bied}{e}
\bmdefine{\bifd}{f}
\bmdefine{\bigd}{g}
\bmdefine{\bihd}{h}
\bmdefine{\biid}{i}
\bmdefine{\bijd}{j}
\bmdefine{\bikd}{k}
\bmdefine{\bild}{l}
\bmdefine{\bimd}{m}
\bmdefine{\bind}{n}
\bmdefine{\biod}{o}
\bmdefine{\bipd}{p}
\bmdefine{\biqd}{q}
\bmdefine{\bird}{r}
\bmdefine{\bisd}{s}
\bmdefine{\bitd}{t}
\bmdefine{\biud}{u}
\bmdefine{\bivd}{v}
\bmdefine{\biwd}{w}
\bmdefine{\bixd}{x}
\bmdefine{\biyd}{y}
\bmdefine{\bizd}{z}
\bmdefine{\bixid}{\xi}
\bmdefine{\bilambdad}{\lambda}
\bmdefine{\bimud}{\mu}
\bmdefine{\bithetad}{\theta}
\bmdefine{\biphid}{\phi}
\bmdefine{\bideltad}{\delta}
\safemath{\bmia}{\biad}
\safemath{\bmib}{\bibd}
\safemath{\bmic}{\bicd}
\safemath{\bmid}{\bidd}
\safemath{\bmie}{\bied}
\safemath{\bmif}{\bifd}
\safemath{\bmig}{\bigd}
\safemath{\bmih}{\bihd}
\safemath{\bmii}{\biid}
\safemath{\bmij}{\bijd}
\safemath{\bmik}{\bikd}
\safemath{\bmil}{\bild}
\safemath{\bmim}{\bimd}
\safemath{\bmin}{\bind}
\safemath{\bmio}{\biod}
\safemath{\bmip}{\bipd}
\safemath{\bmiq}{\biqd}
\safemath{\bmir}{\bird}
\safemath{\bmis}{\bisd}
\safemath{\bmit}{\bitd}
\safemath{\bmiu}{\biud}
\safemath{\bmiv}{\bivd}
\safemath{\bmiw}{\biwd}
\safemath{\bmix}{\bixd}
\safemath{\bmiy}{\biyd}
\safemath{\bmiz}{\bizd}
\safemath{\bmxi}{\bixid}
\safemath{\bmlambda}{\bilambdad}
\safemath{\bmmu}{\bimud}
\safemath{\bmtheta}{\bithetad}
\safemath{\bmphi}{\biphid}
\safemath{\bmdelta}{\bideltad}
\safemath{\bA}{\mathbf{A}}
\safemath{\bB}{\mathbf{B}}
\safemath{\bC}{\mathbf{C}}
\safemath{\bD}{\mathbf{D}}
\safemath{\bE}{\mathbf{E}}
\safemath{\bF}{\mathbf{F}}
\safemath{\bG}{\mathbf{G}}
\safemath{\bH}{\mathbf{H}}
\safemath{\bI}{\mathbf{I}}
\safemath{\bJ}{\mathbf{J}}
\safemath{\bK}{\mathbf{K}}
\safemath{\bL}{\mathbf{L}}
\safemath{\bM}{\mathbf{M}}
\safemath{\bN}{\mathbf{N}}
\safemath{\bO}{\mathbf{O}}
\safemath{\bP}{\mathbf{P}}
\safemath{\bQ}{\mathbf{Q}}
\safemath{\bR}{\mathbf{R}}
\safemath{\bS}{\mathbf{S}}
\safemath{\bT}{\mathbf{T}}
\safemath{\bU}{\mathbf{U}}
\safemath{\bV}{\mathbf{V}}
\safemath{\bW}{\mathbf{W}}
\safemath{\bX}{\mathbf{X}}
\safemath{\bY}{\mathbf{Y}}
\safemath{\bZ}{\mathbf{Z}}
\safemath{\bZero}{\mathbf{0}}
\safemath{\bOne}{\mathbf{1}}
\safemath{\bDelta}{\mathbf{\Delta}}
\safemath{\bLambda}{\mathbf{\UpLambda}}
\safemath{\bPhi}{\mathbf{\Upphi}}
\safemath{\bSigma}{\mathbf{\Upsigma}}
\safemath{\bOmega}{\mathbf{\Upomega}}
\safemath{\bTheta}{\mathbf{\Uptheta}}
\bmdefine{\biAd}{A}
\bmdefine{\biBd}{B}
\bmdefine{\biCd}{C}
\bmdefine{\biDd}{D}
\bmdefine{\biEd}{E}
\bmdefine{\biFd}{F}
\bmdefine{\biGd}{G}
\bmdefine{\biHd}{H}
\bmdefine{\biId}{I}
\bmdefine{\biJd}{J}
\bmdefine{\biKd}{K}
\bmdefine{\biLd}{L}
\bmdefine{\biMd}{M}
\bmdefine{\biOd}{N}
\bmdefine{\biPd}{O}
\bmdefine{\biQd}{P}
\bmdefine{\biRd}{R}
\bmdefine{\biSd}{S}
\bmdefine{\biTd}{T}
\bmdefine{\biUd}{U}
\bmdefine{\biVd}{V}
\bmdefine{\biWd}{W}
\bmdefine{\biXd}{X}
\bmdefine{\biYd}{Y}
\bmdefine{\biZd}{Z}
\bmdefine{\biDelta}{\Delta}
\bmdefine{\biLambda}{\Lambda}
\bmdefine{\biPhi}{\Phi}
\bmdefine{\biSigma}{\Sigma}
\bmdefine{\biOmega}{\Omega}
\bmdefine{\biTheta}{\Theta}
\safemath{\bimA}{\biAd}
\safemath{\bimB}{\biBd}
\safemath{\bimC}{\biCd}
\safemath{\bimD}{\biDd}
\safemath{\bimE}{\biEd}
\safemath{\bimF}{\biFd}
\safemath{\bimG}{\biGd}
\safemath{\bimH}{\biHd}
\safemath{\bimI}{\biId}
\safemath{\bimJ}{\biJd}
\safemath{\bimK}{\biKd}
\safemath{\bimL}{\biLd}
\safemath{\bimM}{\biMd}
\safemath{\bimN}{\biNd}
\safemath{\bimO}{\biOd}
\safemath{\bimP}{\biPd}
\safemath{\bimQ}{\biQd}
\safemath{\bimR}{\biRd}
\safemath{\bimS}{\biSd}
\safemath{\bimT}{\biTd}
\safemath{\bimU}{\biUd}
\safemath{\bimV}{\biVd}
\safemath{\bimW}{\biWd}
\safemath{\bimX}{\biXd}
\safemath{\bimY}{\biYd}
\safemath{\bimZ}{\biZd}
\safemath{\bimDelta}{\biDelta}
\safemath{\bimLambda}{\biLambda}
\safemath{\bimPhi}{\biPhi}
\safemath{\bimSigma}{\biSigma}
\safemath{\bimOmega}{\biOmega}
\safemath{\bimTheta}{\biTheta}
\safemath{\setA}{\mathcal{A}}
\safemath{\setB}{\mathcal{B}}
\safemath{\setC}{\mathcal{C}}
\safemath{\setD}{\mathcal{D}}
\safemath{\setE}{\mathcal{E}}
\safemath{\setF}{\mathcal{F}}
\safemath{\setG}{\mathcal{G}}
\safemath{\setH}{\mathcal{H}}
\safemath{\setI}{\mathcal{I}}
\safemath{\setJ}{\mathcal{J}}
\safemath{\setK}{\mathcal{K}}
\safemath{\setL}{\mathcal{L}}
\safemath{\setM}{\mathcal{M}}
\safemath{\setN}{\mathcal{N}}
\safemath{\setO}{\mathcal{O}}
\safemath{\setP}{\mathcal{P}}
\safemath{\setQ}{\mathcal{Q}}
\safemath{\setR}{\mathcal{R}}
\safemath{\setS}{\mathcal{S}}
\safemath{\setT}{\mathcal{T}}
\safemath{\setU}{\mathcal{U}}
\safemath{\setV}{\mathcal{V}}
\safemath{\setW}{\mathcal{W}}
\safemath{\setX}{\mathcal{X}}
\safemath{\setY}{\mathcal{Y}}
\safemath{\setZ}{\mathcal{Z}}
\safemath{\emptySet}{\varnothing}
\safemath{\colA}{\mathscr{A}}
\safemath{\colB}{\mathscr{B}}
\safemath{\colC}{\mathscr{C}}
\safemath{\colD}{\mathscr{D}}
\safemath{\colE}{\mathscr{E}}
\safemath{\colF}{\mathscr{F}}
\safemath{\colG}{\mathscr{G}}
\safemath{\colH}{\mathscr{H}}
\safemath{\colI}{\mathscr{I}}
\safemath{\colJ}{\mathscr{J}}
\safemath{\colK}{\mathscr{K}}
\safemath{\colL}{\mathscr{L}}
\safemath{\colM}{\mathscr{M}}
\safemath{\colN}{\mathscr{N}}
\safemath{\colO}{\mathscr{O}}
\safemath{\colP}{\mathscr{P}}
\safemath{\colQ}{\mathscr{Q}}
\safemath{\colR}{\mathscr{R}}
\safemath{\colS}{\mathscr{S}}
\safemath{\colT}{\mathscr{T}}
\safemath{\colU}{\mathscr{U}}
\safemath{\colV}{\mathscr{V}}
\safemath{\colW}{\mathscr{W}}
\safemath{\colX}{\mathscr{X}}
\safemath{\colY}{\mathscr{Y}}
\safemath{\colZ}{\mathscr{Z}}
\safemath{\opA}{\mathbb{A}}
\safemath{\opB}{\mathbb{B}}
\safemath{\opC}{\mathbb{C}}
\safemath{\opD}{\mathbb{D}}
\safemath{\opE}{\mathbb{E}}
\safemath{\opF}{\mathbb{F}}
\safemath{\opG}{\mathbb{G}}
\safemath{\opH}{\mathbb{H}}
\safemath{\opI}{\mathbb{I}}
\safemath{\opJ}{\mathbb{J}}
\safemath{\opK}{\mathbb{K}}
\safemath{\opL}{\mathbb{L}}
\safemath{\opM}{\mathbb{M}}
\safemath{\opN}{\mathbb{N}}
\safemath{\opO}{\mathbb{O}}
\safemath{\opP}{\mathbb{P}}
\safemath{\opQ}{\mathbb{Q}}
\safemath{\opR}{\mathbb{R}}
\safemath{\opS}{\mathbb{S}}
\safemath{\opT}{\mathbb{T}}
\safemath{\opU}{\mathbb{U}}
\safemath{\opV}{\mathbb{V}}
\safemath{\opW}{\mathbb{W}}
\safemath{\opX}{\mathbb{X}}
\safemath{\opY}{\mathbb{Y}}
\safemath{\opZ}{\mathbb{Z}}
\safemath{\opZero}{\mathbb{O}}
\safemath{\identityop}{\opI}
\safemath{\veca}{\bma}
\safemath{\vecb}{\bmb}
\safemath{\vecc}{\bmc}
\safemath{\vecd}{\bmd}
\safemath{\vece}{\bme}
\safemath{\vecf}{\bmf}
\safemath{\vecg}{\bmg}
\safemath{\vech}{\bmh}
\safemath{\veci}{\bmi}
\safemath{\vecj}{\bmj}
\safemath{\veck}{\bmk}
\safemath{\vecl}{\bml}
\safemath{\vecm}{\bmm}
\safemath{\vecn}{\bmn}
\safemath{\veco}{\bmo}
\safemath{\vecp}{\bmp}
\safemath{\vecq}{\bmq}
\safemath{\vecr}{\bmr}
\safemath{\vecs}{\bms}
\safemath{\vect}{\bmt}
\safemath{\vecu}{\bmu}
\safemath{\vecv}{\bmv}
\safemath{\vecw}{\bmw}
\safemath{\vecx}{\bmx}
\safemath{\vecy}{\bmy}
\safemath{\vecz}{\bmz}
\safemath{\veczero}{\bmzero}
\safemath{\vecone}{\bmone}
\safemath{\vecxi}{\bmxi}
\safemath{\veclambda}{\bmlambda}
\safemath{\vecmu}{\bmmu}
\safemath{\vectheta}{\bmtheta}
\safemath{\vecphi}{\bmphi}
\safemath{\vecdelta}{\bmdelta}
\safemath{\matA}{\bA}
\safemath{\matB}{\bB}
\safemath{\matC}{\bC}
\safemath{\matD}{\bD}
\safemath{\matE}{\bE}
\safemath{\matF}{\bF}
\safemath{\matG}{\bG}
\safemath{\matH}{\bH}
\safemath{\matI}{\bI}
\safemath{\matJ}{\bJ}
\safemath{\matK}{\bK}
\safemath{\matL}{\bL}
\safemath{\matM}{\bM}
\safemath{\matN}{\bN}
\safemath{\matO}{\bO}
\safemath{\matP}{\bP}
\safemath{\matQ}{\bQ}
\safemath{\matR}{\bR}
\safemath{\matS}{\bS}
\safemath{\matT}{\bT}
\safemath{\matU}{\bU}
\safemath{\matV}{\bV}
\safemath{\matW}{\bW}
\safemath{\matX}{\bX}
\safemath{\matY}{\bY}
\safemath{\matZ}{\bZ}
\safemath{\matzero}{\bmzero}
\safemath{\matDelta}{\bDelta}
\safemath{\matLambda}{\bLambda}
\safemath{\matPhi}{\bPhi}
\safemath{\matSigma}{\bSigma}
\safemath{\matOmega}{\bOmega}
\safemath{\matTheta}{\bTheta}
\safemath{\matidentity}{\matI}
\safemath{\matone}{\matO}
\safemath{\rnda}{A}
\safemath{\rndb}{B}
\safemath{\rndc}{C}
\safemath{\rndd}{D}
\safemath{\rnde}{E}
\safemath{\rndf}{F}
\safemath{\rndg}{G}
\safemath{\rndh}{H}
\safemath{\rndi}{I}
\safemath{\rndj}{J}
\safemath{\rndk}{K}
\safemath{\rndl}{L}
\safemath{\rndm}{M}
\safemath{\rndn}{N}
\safemath{\rndo}{O}
\safemath{\rndp}{P}
\safemath{\rndq}{Q}
\safemath{\rndr}{R}
\safemath{\rnds}{S}
\safemath{\rndt}{T}
\safemath{\rndu}{U}
\safemath{\rndv}{V}
\safemath{\rndw}{W}
\safemath{\rndx}{X}
\safemath{\rndy}{Y}
\safemath{\rndz}{Z}
\safemath{\rveca}{\bimA}
\safemath{\rvecb}{\bimB}
\safemath{\rvecc}{\bimC}
\safemath{\rvecd}{\bimD}
\safemath{\rvece}{\bimE}
\safemath{\rvecf}{\bimF}
\safemath{\rvecg}{\bimG}
\safemath{\rvech}{\bimH}
\safemath{\rveci}{\bimI}
\safemath{\rvecj}{\bimJ}
\safemath{\rveck}{\bimK}
\safemath{\rvecl}{\bimL}
\safemath{\rvecm}{\bimM}
\safemath{\rvecn}{\bimN}
\safemath{\rveco}{\bomO}
\safemath{\rvecp}{\bimP}
\safemath{\rvecq}{\bimQ}
\safemath{\rvecr}{\bimR}
\safemath{\rvecs}{\bimS}
\safemath{\rvect}{\bimT}
\safemath{\rvecu}{\bimU}
\safemath{\rvecv}{\bimV}
\safemath{\rvecw}{\bimW}
\safemath{\rvecx}{\bimX}
\safemath{\rvecy}{\bimY}
\safemath{\rvecz}{\bimZ}
\safemath{\rvecxi}{\bmxi}
\safemath{\rveclambda}{\bmlambda}
\safemath{\rvecmu}{\bmmu}
\safemath{\rvectheta}{\bmtheta}
\safemath{\rvecphi}{\bmphi}
\safemath{\rmatA}{\bimA}
\safemath{\rmatB}{\bimB}
\safemath{\rmatC}{\bimC}
\safemath{\rmatD}{\bimD}
\safemath{\rmatE}{\bimE}
\safemath{\rmatF}{\bimF}
\safemath{\rmatG}{\bimG}
\safemath{\rmatH}{\bimH}
\safemath{\rmatI}{\bimI}
\safemath{\rmatJ}{\bimJ}
\safemath{\rmatK}{\bimK}
\safemath{\rmatL}{\bimL}
\safemath{\rmatM}{\bimM}
\safemath{\rmatN}{\bimN}
\safemath{\rmatO}{\bimO}
\safemath{\rmatP}{\bimP}
\safemath{\rmatQ}{\bimQ}
\safemath{\rmatR}{\bimR}
\safemath{\rmatS}{\bimS}
\safemath{\rmatT}{\bimT}
\safemath{\rmatU}{\bimU}
\safemath{\rmatV}{\bimV}
\safemath{\rmatW}{\bimW}
\safemath{\rmatX}{\bimX}
\safemath{\rmatY}{\bimY}
\safemath{\rmatZ}{\bimZ}
\safemath{\rmatDelta}{\bimDelta}
\safemath{\rmatLambda}{\bimLambda}
\safemath{\rmatPhi}{\bimPhi}
\safemath{\rmatSigma}{\bimSigma}
\safemath{\rmatOmega}{\bimOmega}
\safemath{\rmatTheta}{\bimTheta}
\newenvironment{textbmatrix}{	\setlength{\arraycolsep}{2.5pt}%
								\big[\begin{matrix}}{\end{matrix}\big]%
								\raisebox{0.08ex}{\vphantom{M}}}
\def\be{\begin{equation}}
\def\ee{\end{equation}}
\def\een{\nonumber \end{equation}}
\def\mat{\begin{bmatrix}}
\def\emat{\end{bmatrix}}
\def\btm{\begin{textbmatrix}}
\def\etm{\end{textbmatrix}}
\def\ba#1\ea{\begin{align}#1\end{align}}
\def\bas#1\eas{\begin{align*}#1\end{align*}}
\def\bs#1\es{\begin{split}#1\end{split}}
\def\bg#1\eg{\begin{gather}#1\end{gather}}
\def\bml#1\eml{\begin{multline}#1\end{multline}}
\def\bi#1\ei{\begin{itemize}#1\end{itemize}}
\safemath{\dirac}{\delta}					
\safemath{\krond}{\dirac}					
\safemath{\upto}{\uparrow}
\safemath{\downto}{\downarrow}
\safemath{\iu}{j}							
\safemath{\ev}{\lambda}						
\safemath{\hilseqspace}{l^{2}}				
\newcommand{\banachfunspace}[1]{\setL^{#1}}	
\safemath{\hilfunspace}{\banachfunspace{2}}	
\safemath{\SNR}{\textit{SNR}} 				
\safemath{\PAR}{\textit{PAR}} 				
\safemath{\No}{N_0}							
\safemath{\Es}{E_s}							
\safemath{\Eb}{E_b}							
\safemath{\EbNo}{\frac{\Eb}{\No}}
\safemath{\EsNo}{\frac{\Es}{\No}}
\DeclareMathOperator{\CHop}{\ensuremath{\opH}} 
\safemath{\tvir}{\rndh_{\CHop}}				
\safemath{\tvtf}{\rndl_{\CHop}}				
\safemath{\spf}{\rnds_{\CHop}}				
\safemath{\bff}{H_{\CHop}}					
\safemath{\ircf}{r_{h}}						
\safemath{\tftvcf}{r_{s}}					
\safemath{\tfcf}{r_{l}}						
\safemath{\bfcf}{r_{H}}						
\safemath{\tcorr}{c_h}						
\safemath{\scf}{c_{s}}						
\safemath{\tfcorr}{c_{l}}					
\safemath{\fcorr}{c_{H}}						
\safemath{\mi}{I}							
\safemath{\capacity}{C}						
\safemath{\normal}{\mathcal{N}}			
\safemath{\jpg}{\mathcal{CN}}			
\safemath{\mchain}{\leftrightarrow}		
\safemath{\dB}{\,\mathrm{dB}}
\safemath{\dBm}{\,\mathrm{dBm}}
\safemath{\Hz}{\,\mathrm{Hz}}
\safemath{\kHz}{\,\mathrm{kHz}}
\safemath{\MHz}{\,\mathrm{MHz}}
\safemath{\GHz}{\,\mathrm{GHz}}
\safemath{\s}{\,\mathrm{s}}
\safemath{\ms}{\,\mathrm{ms}}
\safemath{\mus}{\,\mathrm{\text{\textmu}s}}
\safemath{\ns}{\,\mathrm{ns}}
\safemath{\ps}{\,\mathrm{ps}}
\safemath{\meter}{\,\mathrm{m}}
\safemath{\mm}{\,\mathrm{mm}}
\safemath{\cm}{\,\mathrm{cm}}
\safemath{\m}{\,\mathrm{m}}
\safemath{\W}{\,\mathrm{W}}
\safemath{\mW}{\, \mathrm{mW}}
\safemath{\J}{\,\mathrm{J}}
\safemath{\K}{\,\mathrm{K}}
\safemath{\bit}{\,\mathrm{bit}}
\safemath{\nat}{\,\mathrm{nat}}
\safemath{\define}{\triangleq}			
\safemath{\equivalent}{\sim}
\safemath{\distas}{\sim}					
\safemath{\sdiff}{\Delta}				
\safemath{\reals}{\mathbb{R}}
\safemath{\positivereals}{\reals_{+}}
\safemath{\integers}{\mathbb{Z}}
\safemath{\posint}{\integers_{+}}
\safemath{\naturals}{\mathbb{N}}
\safemath{\posnaturals}{\naturals_{+}}
\safemath{\complexset}{\mathbb{C}}
\safemath{\rationals}{\mathbb{Q}}
\newcommand*{\fancyrefapplabelprefix}{app}		
\newcommand*{\fancyrefthmlabelprefix}{thm}		
\newcommand*{\fancyreflemlabelprefix}{lem}		
\newcommand*{\fancyrefcorlabelprefix}{cor}		
\newcommand*{\fancyrefdeflabelprefix}{def}		
\newcommand*{\fancyrefproplabelprefix}{prop}		
\newcommand*{\fancyrefexmpllabelprefix}{exmpl}
\newcommand*{\fancyrefalglabelprefix}{alg}		
\newcommand*{\fancyreftbllabelprefix}{tbl}		
 \newtheorem{thm}{Theorem}
 \newtheorem{lem}[thm]{Lemma}
 \newtheorem{remark}{Remark}
 \newtheorem*{remark*}{Remark}
\safemath{\dictab}{[\,\dicta\,\,\dictb\,]}
\safemath{\ysig}{\bmy}
\safemath{\ysighat}{\hat{\ysig}}
\safemath{\ysigdim}{M}
\safemath{\xsig}{\bmx}
\safemath{\xsigdim}{N}
\safemath{\nx}{n_x}
\safemath{\zsig}{\bmz}
\safemath{\zsigdim}{\ysigdim}
\safemath{\rsig}{\bmr}
\safemath{\Adict}{\bA}
\safemath{\Adicttilde}{\widetilde{\Adict}}
\safemath{\Adictdim}{\outputdim\times\xsigdim}
\safemath{\avec}{\bma}
\safemath{\avectilde}{\tilde{\avec}}
\safemath{\Bdict}{\bB}
\safemath{\Bdicttilde}{\widetilde{\Bdict}}
\safemath{\Cdict}{\bC}
\safemath{\cvec}{\bmc}
\safemath{\Ddict}{\bD}
\safemath{\Ddictdim}{\ysigdim\times\xsigdim}
\safemath{\dvec}{\bmd}
\safemath{\Ddicttilde}{\widetilde{\bD}}
\safemath{\Bonb}{\bB}
\safemath{\bvec}{\bmb}
\safemath{\Bonbdim}{\ysigdim\times\ysigdim}
\safemath{\noise}{\bmn}
\safemath{\noisedim}{\ysigim}
\safemath{\err}{\bme}
\safemath{\errdim}{\ysigdim}
\safemath{\errset}{\setE}
\safemath{\nerr}{n_e}
\safemath{\delop}{\bP_\errset}
\safemath{\delopc}{\bP_{{\errset}^c}}
\safemath{\cplxi}{\imath}
\safemath{\cplxj}{\jmath}
\safemath{\dict}{\matD}
\safemath{\inputdim}{N}		
\safemath{\outputdim}{M}		
\safemath{\sparsity}{S}	
\safemath{\inputdimA}{{N_a}}	
\safemath{\inputdimB}{{N_b}}	
\safemath{\elemA}{{n_a}}	
\safemath{\elemB}{{n_b}}	
\safemath{\resA}{\matR_a}	
\safemath{\resB}{\matR_b}	
\safemath{\subD}{\matS} 
\safemath{\subA}{\matS_a} 
\safemath{\subB}{\matS_b} 
\safemath{\dicta}{\matA} 	
\safemath{\dictb}{\matB} 	
\safemath{\hollowS}{H}
\safemath{\hollowA}{H_a}
\safemath{\hollowB}{H_b}
\safemath{\cross}{Z}
\safemath{\coh}{\mu_d}			
\safemath{\coha}{\mu_a}			
\safemath{\cohb}{\mu_b}			
\safemath{\mubs}{\nu}	
\safemath{\cohm}{\mu_m} 
\safemath{\dictset}{\setD}	
\safemath{\dictsetp}{\dictset(\coh,\coha,\cohb)}	
\safemath{\dictsetgen}{\dictset_\text{gen}}
\safemath{\dictsetgenp}{\dictsetgen(\coh)}
\safemath{\dictsetonb}{\dictset_\text{onb}}
\safemath{\dictsetonbp}{\dictsetonb(\coh)}
\safemath{\leftside}{U}
\safemath{\rightsideA}{R_a}
\safemath{\rightsideB}{R_b}
\safemath{\indexS}{\setI_S} 
\safemath{\na}{n_a}			
\safemath{\nb}{n_b}			
\safemath{\coeffa}{p_i}	
\safemath{\coeffb}{q_j}	
\safemath{\seta}{\setP}		
\safemath{\setb}{\setQ}     
\safemath{\setw}{\setW}	
\safemath{\setz}{\setZ}	
\safemath{\cola}{\veca}		
\safemath{\colb}{\vecb}		
\safemath{\cold}{\vecd}		
\safemath{\inputvec}{\vecx} 	
\safemath{\error}{\vece}	
\safemath{\noiseout}{\vecz} 	
\safemath{\inputvecel}{x}
\safemath{\inputveca}{\vecx_a}
\safemath{\inputvecb}{\vecx_b}
\safemath{\outputvec}{\vecy}	
\safemath{\lambdamin}{\lambda_{\mathrm{min}}}
\safemath{\elltwo}{\ell_2}
\safemath{\ellone}{\ell_1}
\safemath{\ellzero}{\ell_0}
\safemath{\ellinf}{\ell_\infty}
\safemath{\ellinftilde}{\ell_{\widetilde\infty}}
\safemath{\licard}{Z(\coh,\coha,\cohb)}
\safemath{\xsol}{\hat{x}}
\safemath{\xbord}{x_b}		
\safemath{\xstat}{x_s}		
\safemath{\xstatLone}{\tilde{x}_s}
\safemath{\order}{\mathcal{O}} 
\safemath{\scales}{\Theta} 
\safemath{\ones}{\mathbf{1}} 
\safemath{\zeroes}{\mathbf{0}} 
\safemath{\thlone}{\kappa(\coh,\cohb)} 
\safemath{\constoneA}{\delta} 
\safemath{\constoneB}{\epsilon} 
\safemath{\nlarge}{L}				   
\safemath{\sumlarge}{S_\nlarge}
\safemath{\maxlarger}{P_\nlarge}	   
\safemath{\Pzero}{\textrm{P0}}	
\safemath{\Pone}{\textrm{P1}}
\safemath{\vecfir}{\vecw}			 
\safemath{\vecsec}{\vecz}
\safemath{\elvecfir}{w}              
\safemath{\elvecsec}{z}				 
\safemath{\nlargefir}{n}
\safemath{\normout}{\gamma}
\safemath{\auxfun}{h}
\safemath{\supp}{\textrm{supp}}
\safemath{\indexa}{\ell}
\safemath{\indexb}{r}
\safemath{\indexc}{i}
\safemath{\indexd}{j}
\safemath{\project}{P}
\newcommand{\mypm}{\mathbin{\smash{%
\raisebox{0.35ex}{%
            $\underset{\raisebox{0.3ex}{$\smash -$}}{\smash+}$%
            }%
        }%
    }%
}
\newcommand{\pmsynth}{$\boldsymbol\mypm$synth\xspace} 
\newcommand{\mcu}{MCU\xspace}
\newcommand{\fs}{\ensuremath{f_\text{s}}}
\newcommand{\qf}{\ensuremath{q_\text{f}}}
\newcommand{\qi}{\ensuremath{q_\text{i}}}
\newcounter{numauth}\setcounter{numauth}{1}
\newcounter{listcnt}\setcounter{listcnt}{1}
\newcommand\authcnt[1]{\ifdefined#1 \stepcounter{numauth} \fi}
\newcommand\addauth[1]{
\ifdefined#1 
\stepcounter{listcnt}
\ifnum \value{listcnt}<\value{numauth}
\appto\authorslist{, #1}
\else
\appto\authorslist{~and~#1}
\fi
\fi}
\def\authorslist{\paperauthorA}
\renewcommand{\paragraph}[1]{{\vspace{\baselineskip}\noindent\normalfont\bfseries#1\quad}}
\newif\ifpdf
\title{\papertitle}
\affiliation{
\paperauthorA, \paperauthorB, \paperauthorC, and \paperauthorD \thanks{\vspace{-3mm}}}
{Department of Information Technology and Electrical Engineering \\ ETH Zurich, Switzerland\\
{\tt \href{mailto:joroth@ethz.ch}{joroth@ethz.ch} | \href{mailto:domkeller@ethz.ch}{domkeller@ethz.ch} | 
\href{mailto:caoscar@ethz.ch}{caoscar@ethz.ch} | 
\href{mailto:studer@ethz.ch}{studer@ethz.ch}}
}
\begin{document}
\ifpdf 
  \DeclareGraphicsExtensions{.png,.jpg,.pdf}
\else  
  \DeclareGraphicsExtensions{.eps}
\fi


\maketitle

\begin{abstract}
Analog subtractive synthesizers are generally considered to provide superior sound quality compared to digital emulations. However, analog circuitry requires calibration and suffers from aging, temperature instability, and limited flexibility in generating a wide variety of waveforms. Digital synthesis can mitigate many of these drawbacks, but generating arbitrary aliasing-free waveforms remains challenging. 
In this paper, we present the \pmsynth, a hybrid digital-analog eight-voice polyphonic synthesizer prototype that combines the best of both worlds. 
At the heart of the synthesizer is the big Fourier oscillator (BFO), a novel digital very-large scale integration (VLSI) design that utilizes additive synthesis to generate a wide variety of aliasing-free waveforms.
Each BFO produces two voices, using four oscillators per voice. A single oscillator can generate up to $1024$ freely configurable partials (harmonic or inharmonic), which are calculated using coordinate rotation digital computers (CORDICs).
The BFOs were fabricated as $65$\,nm CMOS custom application-specific integrated circuits (ASICs), which are integrated in the \pmsynth to simultaneously generate up to $32\,768$ partials.
Four $24$-bit $96$\,kHz stereo DACs then convert the eight voices into the analog domain, followed by digitally controlled analog low-pass filtering and amplification.
Measurement results of the \pmsynth prototype demonstrate high fidelity and low latency. 
\end{abstract}

\section{Introduction}
\label{sec:intro}

Digital sound synthesis has many advantages over implementations with analog circuitry. 
Most notably, digital implementations are able to produce a wide variety of waveforms and do not suffer from temperature instabilities, aging, and component variations. 
However, aliasing is a ubiquitous nuisance in digital sound synthesis and specialized signal processing techniques are often necessary to combat such artifacts. 
For example, the work in~\cite{stilson_1996} proposes low-complexity methods to generate aliasing-free waveforms of classical analog synthesizers (e.g., rectangle, sawtooth, and triangle). Nonetheless, this method is unable to generate more complex waveforms.
In stark contrast, direct digital synthesis (DDS)~\cite{reinhardt_direct_1985} enables the generation of nearly arbitrary waveforms at very low complexity. Unfortunately, na\"ive DDS implementations generally suffer from aliasing. While aliasing can be reduced to a certain extent with oversampling followed by low-pass filtering, such an approach diminishes the complexity advantages of DDS. 
The work in~\cite{schanze1995} presents a method that is able to generate arbitrary alias-free single-period wavetable waveforms. This method, however, requires intricate trigonometric functions that must be calculated at high precision and is, thus, not well-suited for efficient software and hardware implementations.

An alternative sound-synthesis approach that eliminates aliasing altogether is to use additive synthesis~\cite{moorer1977}, without ever generating partials that exceed half the sampling rate.
Many software synthesizers support additive synthesis and benefit from the flexibility and user-interface capabilities that software brings.
However, relying on general-purpose processors, such implementations have to balance signal quality and computational complexity, which limits the amount of partials that can be generated and affects their purity.\footnote{The maximum number of partials varies significantly across different plugins and depends on several parameters, such as the number of voices, signal quality, and others. Alchemy from Apple's Logic Pro X, for example, supports up to $600$ partials~\cite{alchemy_additive}, while others support less than a dozen.}
To overcome the limitations of additive synthesis software implementations, the work in~\cite{debernardinis99} proposes a specialized hardware design, which is able to generate a large number of partials (up to $1200$) with a single application-specific integrated circuit (ASIC).  
Such an implementation would enable the generation of a wide variety of complex and aliasing-free waveforms, but, to the best of our knowledge, no working system was demonstrated. 

In recent years, a number of commercially available hybrid digital-analog instruments emerged, which can generate a broad range of high-quality waveforms. 
Specific instances are the Arturia Freak Series~\cite{arturiafreak}, Sequential Prophet~X~\cite{prophetx}, Udo Super~6~\cite{udosuper6}, and Waldorf Quantum~\cite{waldorfquantum}. Unfortunately, only very little  is known about the inner workings of these instruments and, thus, it remains largely a mystery how the waveforms are synthesized.

\subsection{Contributions}

We present the \pmsynth, an eight-voice polyphonic hybrid digital-analog synthesizer prototype that combines digital oscillators with analog filtering and amplification. 
Each voice consists of four digital oscillators, each able to generate a wide range of aliasing-free waveforms using an additive synthesis approach\footnote{Thus the name \emph{\pmsynth}, where $\boldsymbol{+}$ represents the additive synthesis approach and $\boldsymbol{-}$ the subtractive architecture of the instrument.} with up to $1024$ partials (harmonic, inharmonic, or subharmonic); in total, the instrument can generate $32\,768$ partials simultaneously. 
The oscillators are implemented using a custom ASIC, the \emph{big Fourier oscillator (BFO)}, which generates two voices. 
To arrive at high hardware efficiency of the BFO ASIC implementation, we present a range of algorithm-level optimizations and a very-large scale integration (VLSI) architecture that utilizes CORDICs (short for coordinate rotation digital computers) to generate   partials of high purity.
We show how the BFO ASICs are integrated into the \pmsynth hardware prototype, including the analog section that incorporates voltage controlled filters (VCFs) and voltage controlled amplifiers (VCAs) based on commercial integrated circuits (ICs).
Finally, we present implementation results of the BFO ASIC and measurement results at various output stages in the instrument. 
A photo of the \pmsynth hardware prototype, including the external MIDI keyboard and controller, is shown in \fref{fig:synthsetup}.

\begin{figure}[tp]
\centerline{\includegraphics[width=0.9\columnwidth]{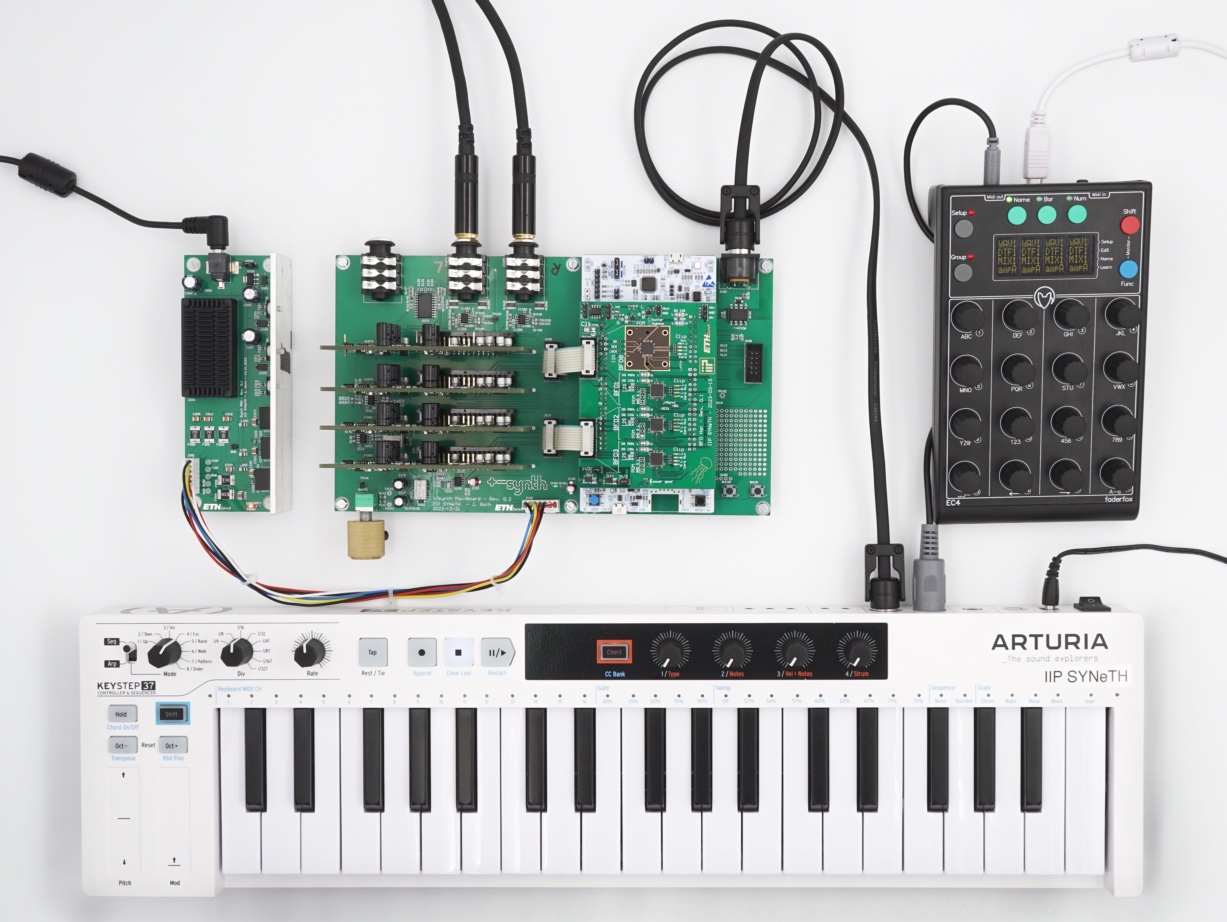}}
\caption{\label{fig:synthsetup}{\it Overview of the synthesizer setup. Top left: custom power supply; top middle: \pmsynth hardware prototype; top right: MIDI controller; bottom: MIDI keyboard.}}
\end{figure}

\section{Synthesizer Architecture}

Figure~\ref{fig:synth_standalone} shows a photo of the \pmsynth hardware prototype consisting of a main board (with audio, MIDI, and power connectors), an STM32 Nucleo development board (in white), with a custom printed circuit board (PCB) on top hosting four BFO ASICs, and four analog voice PCBs plugged into the main board (top right). The key components of the hardware design are discussed next.

\subsection{System Overview}

A system architecture overview of the  \pmsynth is given in \fref{fig:synth_overview}. 
The instrument relies on a hybrid digital-analog version of subtractive synthesis with digital oscillators, analog filters, and analog amplifiers. 
The \pmsynth is digitally controlled by a microcontroller unit (\mcu), which receives user inputs from an external MIDI keyboard and a universal MIDI controller with rotary encoders. The \mcu generates all of the control signals for the digital oscillators as well as for the analog filters and amplifiers.
The instrument is able to generate eight independent voices, where each BFO ASIC implements two voices and each voice consists of four digital oscillators. 
The digital voices are converted into the analog domain using four stereo DACs and each voice is separately processed by a VCF and a VCA; an \emph{analog voice PCB} houses these analog components required for two voices. The control voltages (CVs) are generated using CV-DACs, which are digitally controlled by the \mcu. 
The analog voices are summed to create the instrument's output signal, which can be either mono (eight voices) or stereo (four voices per channel); this is controlled by relays. The \pmsynth offers stereo line-level and headphone outputs.
A custom power supply derives all of the required voltages for both the analog and digital domains from an off-the-shelf $24$\,V DC power supply.

\begin{figure}[t]
\centering
\includegraphics[width=0.9\columnwidth]{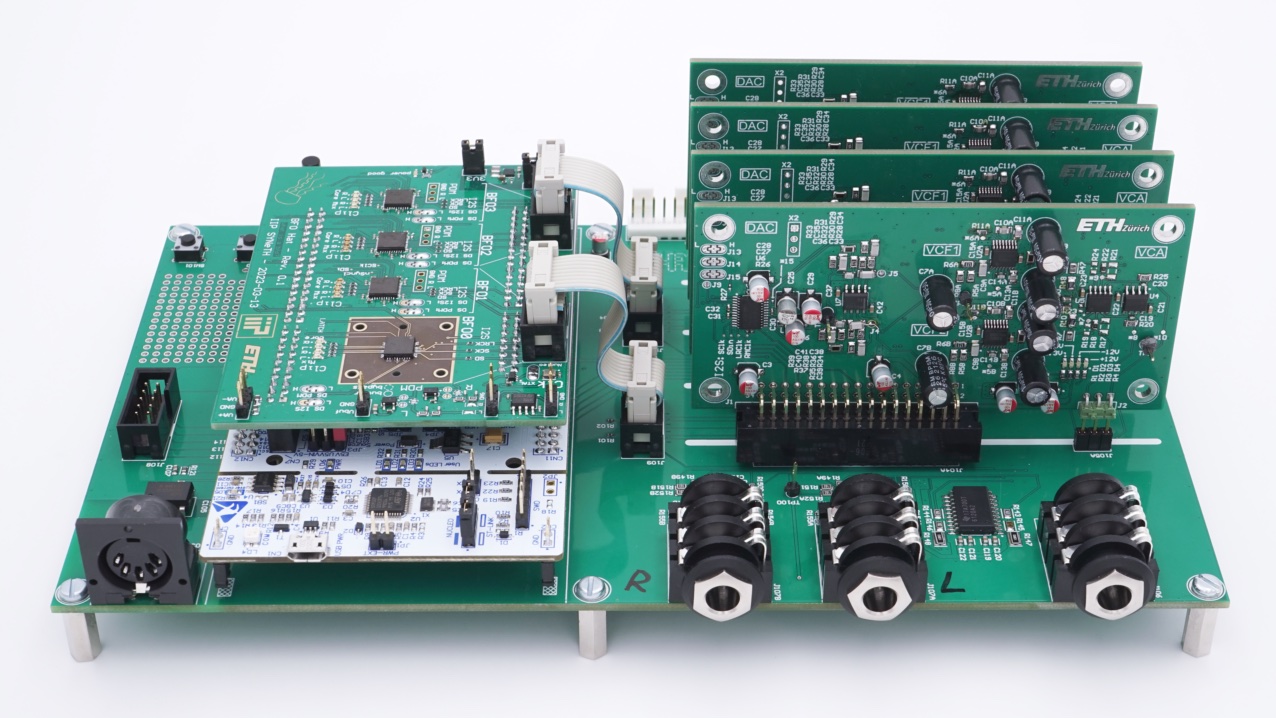}
\caption{\it Photo of the \pmsynth hardware prototype.}
\label{fig:synth_standalone}
\end{figure}

\begin{figure}[t]
\centering
\includegraphics[width=0.9\columnwidth]{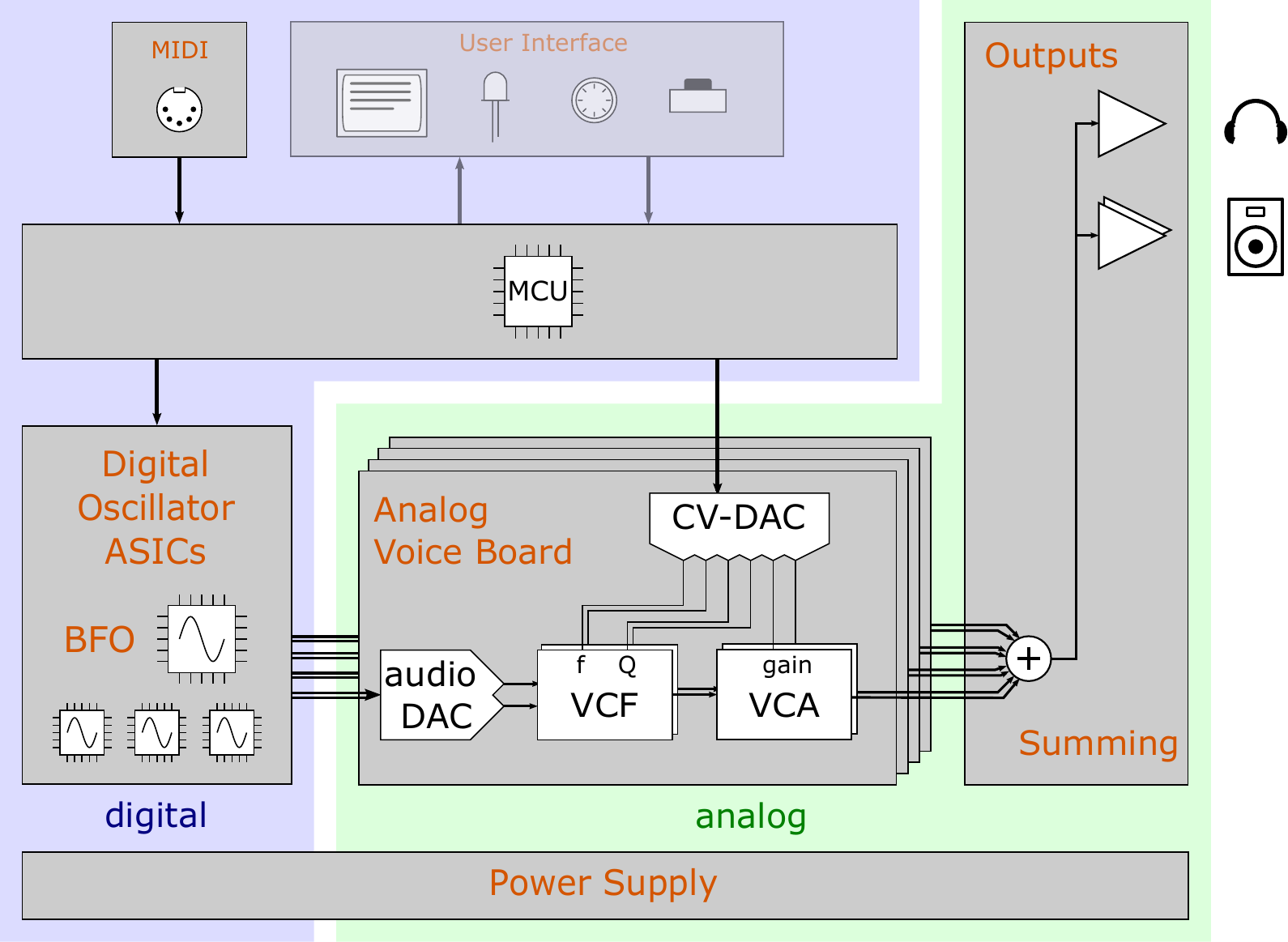}
\caption{\it Architecture overview of the \pmsynth.}
\label{fig:synth_overview}
\end{figure}

\subsection{Digital Control and Interfaces}

Digital control of the instrument is carried out on an STM23 Nucleo development board (STM32F722ZE), which was chosen to bypass the recent chip shortage. 
Received MIDI messages are used to compute the control signals for the audio path (e.g., oscillator pitch and volume, filter cutoff-frequency and resonance, and amplifier gain). The \mcu also implements the attack, decay, sustain, and release (ADSR) envelope generators as well as low-frequency oscillators (LFOs).
The \mcu generates and transmits the control signals to the digital oscillators using a serial parallel interface (SPI) bus. The oscillators then generate eight voices that are converted to the analog domain via inter-IC sound (I2S) interfaces and four $24$-bit stereo DACs. 
A second SPI bus on the \mcu interfaces with the CV-DACs that control the analog voice processing paths.
The details of the digital oscillators and analog voice PCBs are discussed next. 

\subsection{Aliasing-Free Digital Oscillator}
\label{sec:oscillator}
At the heart of the \pmsynth are digital aliasing-free oscillators, which utilize additive synthesis, each generating up to $K=1024$ partials.\footnote{With up to $K=1024$ partials per oscillator, we can generate, for example,  a sawtooth wave with a base frequency $f=20$\,Hz and the highest harmonic at $20\,480$\,Hz, which is at the edge of the audible spectrum.} Each voice is composed of four digital oscillators, which can be mixed together arbitrarily with configurable gains. Our custom ASIC, the BFO, implements two voices, and the  \pmsynth consists of four BFO ASICs to provide eight-voice polyphony. 
Each oscillator calculates its samples $x[\ell]$ using the following Fourier series: 
\begin{align} \label{eq:fourierseries}
x[\ell] = \sum_{k=1}^{K} a_{k}\cos\!\left(2\pi \frac{f n_{k}}{\fs} \ell\right)+b_{k}\sin\!\left(2\pi \frac{f n_{k}}{\fs} \ell\right)\!.
\end{align}
The oscillator parameters $\{a_k,b_k,n_k\}_{k=1}^K$, together with the oscillator's base frequency $f$ and the system's sampling rate $\fs$, fully determine the waveform to be generated. Note that each oscillator has its own set of parameters. 
In order to avoid aliasing altogether, we only sum the terms in~\fref{eq:fourierseries} indexed by the set\footnote{We note that aliasing can still occur if one reconfigures the oscillator parameters $\{a_k,b_k,n_k\}_{k=1}^K$ at too fast rates.} 
\begin{align} \label{eq:nyquistset}
\setK(f) = \{ k=1,\ldots,K : f n_k < \fs/2\},
\end{align} 
i.e., we replace $k=1,\ldots,K$ with $k\in\setK(f)$ in \fref{eq:fourierseries}.
\fref{sec:BFOVLSI} details how this additive synthesis approach is implemented in our custom BFO ASICs.

\begin{remark}
We emphasize that \fref{eq:fourierseries} is, strictly speaking, not a Fourier series, as we also allow the multipliers $n_k\in\mathbb{Q}$, $k=1,\ldots,K$, to be nonnegative rational numbers represented in the chosen fixed-point format (see \fref{sec:fixpoint} for the details). This flexibility enables us to generate waveforms with harmonics, inharmonics, and subharmonics, which implies that a single oscillator cannot only generate standard analog synthesizer waveforms, but also arbitrary wavetable sounds or bell-like timbres. 
 \end{remark}

Each of the four oscillators of a voice generates samples according to~\fref{eq:fourierseries}, which are weighted and summed. Then, the two BFO voices can be further mixed before being passed to the I2S output. Details on the mixing stage are provided in \fref{sec:bellsandwhistles}. 

\subsection{Analog Voice Boards}

\begin{figure}[t]
\includegraphics[width=0.9\columnwidth]{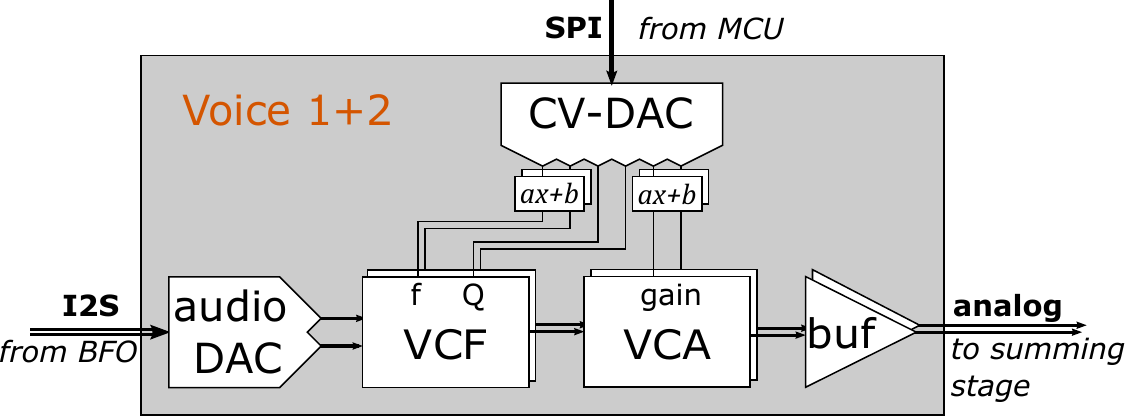}
\caption{{\it Block diagram of an analog voice PCB.}}
\label{fig:asv_diagram}
\end{figure}

The \pmsynth features four analog voice PCBs, which can be seen at the top-right of \fref{fig:synth_standalone}. Each of these PCBs carries out analog signal processing for two voices; a block diagram is depicted in \fref{fig:asv_diagram}.
The audio DAC receives the samples from the two voices generated by a BFO ASIC at a sampling rate of $96$\,kHz. 
We used a {CS4350} $24$-bit stereo DAC from Cirrus Logic~\cite{DatShCS4350},  which contains an integrated phase-locked loop (PLL) that derives its master clock from the I2S clock. This eliminates the need to route a  separate $25$\,MHz clock signal to each analog voice PCB.

The VCFs are implemented using the {SSI2144} IC from Sound Semiconductor~\cite{DatShSSI2144}, which implements a fourth-order low-pass ladder filter. The cutoff frequency and resonance (Q-factor) are set by CVs.
The VCAs are implemented using the  {SSI2162} IC from Sound Semiconductor~\cite{DatShSSI2162}, which is used to apply the envelope determined by another CV. Finally, the analog voice signal is buffered before leaving the PCB to the final summing stage on the main board that produces headphone and line outputs. 
Note that some CVs are low-pass filtered, scaled, and offset in order to avoid crosstalk to the audio path and to match the required voltage range; this is implemented using operational amplifiers.

\section{VLSI Design of the Big Fourier Oscillator}
\label{sec:BFOVLSI}

In order to develop an efficient VLSI design that is able to  implement multiple aliasing-free oscillators, each with a large number of high-quality partials, we leverage a range of algorithm- and hardware-level tricks which are discussed next.  

\subsection{Fixed-Point Arithmetic}
\label{sec:fixpoint}
Our VLSI design exclusively uses fixed-point arithmetic, mostly with $32$-bit fixed-point precision, which enables the generation of extremely pure partials with low harmonic distortion at high hardware efficiency. 
The fixed-point number format is designated by the notation $\{\square,\qi,\qf\}$, where $\qi$ is the number of integer bits, $\qf$ is the number of fractional bits, and $\square$ is either~$s$ or~$u$ if the fixed-point number is signed or unsigned, respectively.
The coefficients~$a_k$ and~$b_k$  in~\fref{eq:fourierseries} use the format $\{s,0,31\}$, which gives precise control over the partials' amplitudes and phases. The multipliers $n_k$ use the format $\{u,16,16\}$, which enables harmonic, inharmonic, and subharmonic partials with fine frequency resolution. 
The base frequency is normalized as $f/\fs$ and uses the format $\{u,0,32\}$, which results in a frequency resolution of $22.4$\,$\upmu$Hz at   $\fs=96$\,kHz.
While the generated samples have an internal resolution of $32$\,bits, the I2S output is reduced to the DACs' resolution of $24$\,bits. 

\subsection{Algorithm-Level Optimizations}
\label{sec:algo_opt}

To improve the hardware efficiency of our VLSI design, we use the following algorithm-level optimizations that reduce the complexity of calculating the samples as in \fref{eq:fourierseries}. 

\paragraph{Reparametrization from Radians to Turns}
Instead of directly computing the arguments $\phi_k[\ell] \define 2\pi \frac{f n_{k}}{\fs} \ell$ of the cosine and sine functions in \fref{eq:fourierseries}, which are in the unit of radians, our VLSI design calculates the two functions $\mathrm{co}(\theta) \define \cos(2\pi \theta)$ and $\mathrm{si}(\theta) \define \sin(2\pi \theta)$ instead. Here, the arguments $\theta_k[\ell] \define \frac{f n_{k}}{\fs} \ell$ are represented in what is known as \emph{turns}, which has several advantages.
First, the arguments $\theta_k[\ell]$ are in the range~$[0,1)$, which requires one to pass arguments to the $\mathrm{co}(\cdot)$ and $\mathrm{si}(\cdot)$ functions of the form
\begin{align} \label{eq:thetak}
\theta_k[\ell]=\frac{f n_{k}}{\fs} \ell \bmod{1}.
\end{align}
The modulo-$1$ operation can be obtained in hardware for free by simply discarding the integer part of~$\theta_k[\ell]$ when represented by unsigned fixed-point numbers. 
Second, one can directly calculate the functions $\mathrm{co}(\cdot)$ and $\mathrm{si}(\cdot)$ in a hardware-friendly manner using CORDICs; see \fref{sec:CORDIC} for the details.

\paragraph{Sequential Calculation of Arguments}
For each sample $\ell=0,1,\ldots$ and partial $k=1,\ldots,K$, two multiplications are required to calculate the argument $\theta_k[\ell]$ in \fref{eq:thetak}. First, the argument increment $\delta_k \define {f n_{k}}/{\fs}$ is computed by multiplying the normalized base frequency $f/\fs$ by the frequency multiplier $n_k$.
Second, the argument increment $\delta_k$ is multiplied by the sample index $\ell$.
While this last multiplication occurs in \mbox{modulo-$1$} arithmetic, it still requires a high dynamic range, as the sample index $\ell$ is represented with a large number of bits to avoid unwanted resetting in the oscillators.
Hence, every sample $\ell$ requires $K=1024$ of these high-resolution $\theta_k[\ell]=\delta_k \ell \bmod{1}$ products. 
This complexity could easily be reduced by tracking the argument~$\theta_k$ for each partial $k=1,\ldots,K$ with an addition instead of a multiplication. In specific, one can update the argument~$\theta_k$ for every new sample $\ell$ as  
\begin{align} \label{eq:iterativeargumentupdate}
\theta_k \gets (\theta_k + \delta_k) \bmod{1}.
\end{align}
The arguments are initialized as $\theta_k=0$ at sample index $\ell=0$. 
We reiterate that the modulo-$1$ operation is free in hardware by discarding the integer part of the arguments $\theta_k$, $k=1,\ldots,K$.  

\paragraph{Single Argument for All Partials}
The remaining disadvantage of the above method is that one must keep track of the arguments $\theta_k$ for every partial $k=1,\ldots,K$. This requires additional storage for all arguments in a two-port memory that supports one simultaneous read and write per update of \fref{eq:iterativeargumentupdate}. 
To avoid an additional memory, we keep track of a \emph{single} base-frequency argument $\theta\define\frac{f}{\fs} \ell$  from which all arguments~$\theta_k$ can be derived as follows:
\begin{align} \label{eq:argumentutok}
\theta_k = \theta n_k \bmod{1}, \quad k=1,\ldots,K.
\end{align}
Unfortunately, unlike the arguments $\theta_k$ in \fref{eq:iterativeargumentupdate}, the base-frequency argument $\theta$ cannot be accumulated modulo-$1$ since the property 
\begin{align}\label{eq:keyproperty}
\theta n_k \bmod{1}=(\theta\bmod{m}) n_k \bmod{1} 
\end{align}
with $m=1$ does \emph{not} hold for every $n_k$.
For example, for $\theta=1$ and $n_k=1.5$, $(\theta\bmod{1})n_k\bmod{1}=0$ is different from the desired $\theta n_k\bmod{1}=0.5$.
Indeed, to calculate $\theta_k$ for an arbitrary frequency multiplier $n_k$ from the base-frequency argument~$\theta$, the quantity $\theta$~needs to be represented with an infinite number of integer bits (i.e., without using a modulo operation).
Nevertheless, provided that the multipliers~$n_k$ are represented with a finite number of fractional bits $\qf$, it is sufficient to represent $\theta$ using a finite number of integer bits. This key insight is made rigorous by \fref{lem:modchange}.

\begin{lem}
\label{lem:modchange}
 $\theta n \bmod{l}=(\theta\bmod{m}) n \bmod{l}$ if $m n \bmod l = 0 $.
\end{lem}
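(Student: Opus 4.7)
My plan is to unfold the definition of the real-valued modulo operation and then exploit the hypothesis $mn \bmod l = 0$ to show that the "integer-part" contribution vanishes once we reduce modulo $l$.

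Concretely, I would begin by writing the division identity $\theta = m\lfloor \theta/m\rfloor + (\theta \bmod m)$, which is valid for any real $\theta$ and any $m>0$, with $q \define \lfloor \theta/m\rfloor \in \integers$. Multiplying by $n$ gives
\begin{align*}
\theta n = q\, m n + (\theta \bmod m)\, n.
\end{align*}
The next step is to reduce modulo $l$. The hypothesis $mn \bmod l = 0$ means that $mn$ is an integer multiple of $l$, say $mn = kl$ with $k\in\integers$; hence $qmn = qkl$ is also an integer multiple of $l$, so $qmn \bmod l = 0$.

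Finally, I would invoke the additive compatibility of the real-valued modulo, namely $(a+b)\bmod l = ((a\bmod l) + (b\bmod l))\bmod l$, to conclude
\begin{align*}
\theta n \bmod l = \bigl(qmn + (\theta\bmod m)\, n\bigr)\bmod l = (\theta\bmod m)\, n \bmod l,
\end{align*}
which is the desired identity. The only step that deserves care is the observation that "$mn \bmod l = 0$" in the real setting is equivalent to $mn/l \in \integers$, so that scaling by an arbitrary integer $q$ preserves being a multiple of $l$; once that is noted, the argument is a one-line algebraic manipulation with no real obstacle.
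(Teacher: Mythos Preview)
Your proposal is correct and follows essentially the same route as the paper: decompose $\theta = qm + (\theta\bmod m)$ with $q\in\integers$, use the hypothesis to write $mn = kl$ with $k\in\integers$, and conclude that the $qmn$ term vanishes modulo~$l$. The paper's proof is marginally terser (it does not spell out the floor function or the additive property of $\bmod$), but the argument is identical in substance.
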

\begin{proof}
Let $\Theta=\theta \bmod m$, so that we want to show $\theta n \bmod{l}=\Theta n \bmod{l}$. We rewrite $\Theta=\theta \bmod m$ and $mn \bmod l = 0$ as
\begin{align}
\theta & = a m + \Theta,~\text{and} \label{eq:modtheta} \\
m n & = bl, \label{eq:modell}
\end{align}
where $a,b\in\mathbb{Z}$. Multiplying both sides of \fref{eq:modtheta} by $n$, we get
\begin{align}
\theta n & = a m n + \Theta n = a b l + \Theta n, \label{eq:modellin}
\end{align}
where \fref{eq:modellin} follows from \fref{eq:modell}.
Since $a b\in\mathbb{Z}$, taking the modulo-$l$ of both sides of \fref{eq:modellin} results in $\theta n\bmod{l} = \Theta n \bmod l$.
\end{proof}

By applying \fref{lem:modchange} with $n=n_k$ and $l=1$, we can determine the value of $m$ so that \fref{eq:keyproperty} is satisfied for the multipliers $n_k$ used in the BFO. In words, \fref{lem:modchange} is telling us that, instead of keeping track of $\theta$ with infinite precision, we can just keep track of its modulo-$m$ equivalent and any multiplication by the frequency multipliers~$n_k$ will be correct in modulo-$1$ arithmetic as long as $m n_k\in\mathbb{Z}$. Given that $n_k$ has $\qf=16$ fractional bits, we can satisfy this last requirement by setting $m=2^{\qf}$. Therefore, we keep track of the base-frequency argument~$\theta$ with the recursion
\begin{align} \label{eq:argumentu}
\theta\gets(\theta+\delta) \bmod{2^{\qf}},
\end{align}
where $\delta=f/\fs$. 
The modulo-$2^{\qf}$ operation is easily implemented in hardware by using only $\qf$ bits to represent the integer part of $\theta$ and letting the result wrap-around once the maximum representable number is reached.
We note that, if all $n_k\in\mathbb{Z}$, then $\theta$ could be tracked in modulo-$1$ arithmetic---thus, the recursion in \fref{eq:argumentu} is required as we also support inharmonic and subharmonic partials.

We observe that, while this approach requires $K=1024$ multiplications per sample~$\ell$ as in \fref{eq:argumentutok}, it avoids (i) storing arguments per partial and (ii) additional $K=1024$ multiplications per sample that would be required to compute the frequency increments $\delta_k=\delta n_k$. Thus, our approach leverages a trade-off between the high complexity of na\"ively computing $\theta_k[\ell]$ as in \fref{eq:thetak} and the large memory overhead of a per-$\theta_k$-argument accumulation as in~\fref{eq:iterativeargumentupdate}.

\subsection{Computing Cosines and Sines with CORDICs}
\label{sec:CORDIC}
To calculate cosine and sine functions at high precision and in a hardware-friendly way, we utilize CORDICs~\cite{volder}, which essentially calculate two-dimensional Givens rotations of the following form:
\begin{align} \label{eq:givens}
\underbrace{
\left[\begin{array}{c}
p_1' \\
p_2'
\end{array}\right]}_{\bmp'} = 
\underbrace{
\left[\begin{array}{cc}
\cos(\phi) & -\sin(\phi)\\
\sin(\phi) & \cos(\phi)
\end{array}\right]}_{=\bG(\phi)}
\underbrace{
\left[\begin{array}{c}
p_1 \\
p_2
\end{array}\right]}_{=\bmp}\!.
\end{align}
Evidently, by setting $\phi = 2\pi \frac{f n_{k}}{\fs} \ell$, $p_1=a_k$, and $p_2=-b_k$, the output~$p_1'$ is exactly one term of the Fourier series in \fref{eq:fourierseries}. 

We now outline the idea behind CORDICs---the interested reader is referred to~\cite{parhami} for more details. 
First, one approximates the desired rotation angle $\phi\approx\sum_{m=0}^{M-1} \phi_m$ by summing $M$ predefined micro-rotation angles $\phi_m$, $m=0,\ldots,\mbox{$M-1$}$; see below for a concrete choice of these angles.
Second, the Givens rotation in~\fref{eq:givens} is approximated by $\bG(\phi)\approx\prod_{m=0}^{M-1}\bG(\phi_m)$ with~$M$ so-called micro-rotations $\bG(\phi_m)$, which are simply Givens rotations by the angles~$\phi_m$. 
Third, one rewrites each micro-rotation as 
\begin{align} 
\bG(\phi_m) = \kappa_m
\left[\begin{array}{cc}
1 & -\tan(\phi_m)\\
\tan(\phi_m) & 1
\end{array}\right]\!,
\end{align}
where $\kappa_m=(1+\tan^2(\phi_m))^{-\frac{1}{2}}$. 
Fourth, one restricts the micro-rotation angles~$\phi_m$ to $\tan(\phi_m)=d_m2^{-m}$ with $d_m\in\{-1,+1\}$.
With this, the Givens rotation in \fref{eq:givens} is approximated as
\begin{align} \label{eq:microrotations}
\bmp' \approx \kappa \prod_{m=0}^{M-1} 
\left[\begin{array}{cc}
1 & -d_m 2^{-m} \\
d_m 2^{-m} & 1
\end{array}\right] \bmp.
\end{align}
Here, the scaling factor $\kappa=\prod_{m=0}^{M-1}\kappa_m$  depends only on the number of micro-rotations $M$ and not on the choices of $d_m$.  
Fifth, one needs to determine the micro-rotation angles $\phi_m$ that well-approximate the target angle $\phi$.
The standard procedure iteratively determines $\phi_m$  from the target angle $\phi$, i.e., by first taking the angle $\phi_0=d_0\mathrm{atan}(2^{-0})$ that brings $\phi$ closer to zero. One then updates the target angle as $\phi\gets\phi-d_0\mathrm{atan}(2^{-0})$ and  takes a new angle $\phi_1=d_1\mathrm{atan}(2^{-1})$ that brings the updated $\phi$ closer to zero.
This procedure is repeated for the remaining $M-2$ micro-rotations. 

We note that every additional micro-rotation provides roughly one additional bit of precision~\cite{parhami}; this implies that the approximation error in \fref{eq:microrotations} can be made arbitrarily small. In our application, we use a quadrant correction followed by $M=26$ micro-rotations, which leads to a precision of $\qf=24$ fraction bits (see \fref{sec:measurements} for measurements).
Furthermore, instead of representing the micro-rotation angles $\phi_m$ in radians, we represent them in turns; this enables us to directly calculate the functions $\mathrm{co}(\cdot)$ and $\mathrm{si}(\cdot)$ with a CORDIC. 
Finally, it is crucial to realize that each micro-rotation in  \fref{eq:microrotations}  only involves shifts, additions, subtractions, and a multiplication by the constant $\kappa$; this implies that the Givens rotation in \fref{eq:givens} can be approximated in a hardware-friendly manner, generating samples of cosine and sine functions with extremely high purity.

\begin{figure}[tp]
\centering
\begin{minipage}[b]{0.4425\columnwidth}
\vspace{0pt}
\includegraphics[width=1\columnwidth]{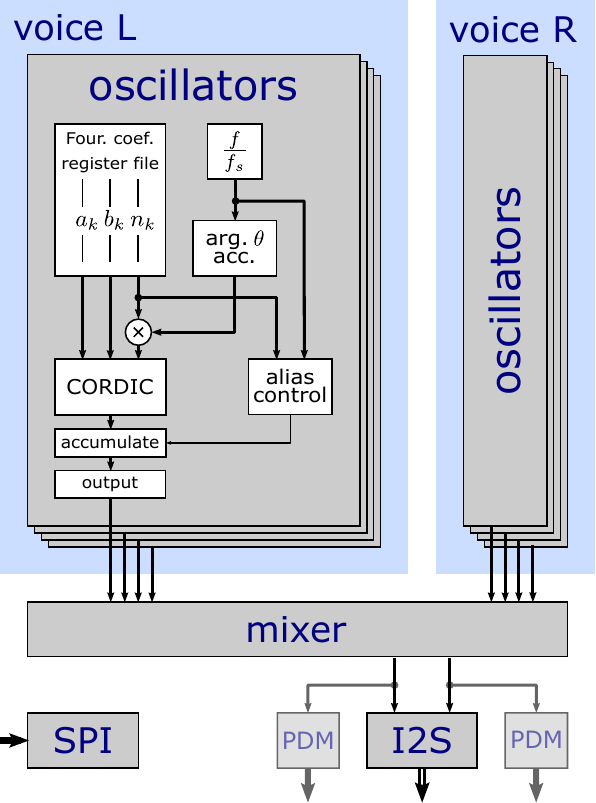}
\caption[caption]{\it BFO architecture.}
\label{fig:bfo_overview}
\vspace{0pt}
\end{minipage}
\hfill
\begin{minipage}[b]{0.45\columnwidth}
\includegraphics[width=1\columnwidth]{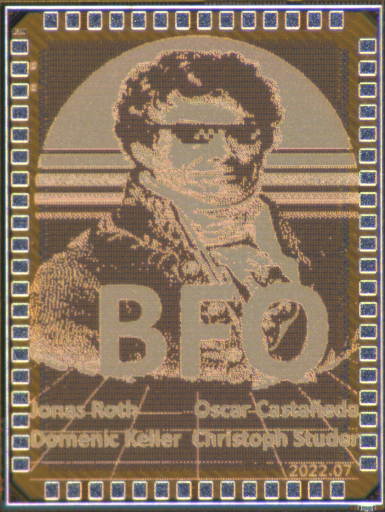}
\caption[caption]{\it ASIC micrograph.}
\label{fig:bfo_die}
\vspace{0pt}
\end{minipage}
\end{figure}

\subsection{VLSI Architecture}
\label{sec:VLSIarchitecture}

\fref{fig:bfo_overview} depicts the VLSI architecture of our BFO ASIC, which consists of two voices, referred to as L and R, each one with four oscillators.
Each oscillator features a $1024\times 96$-bit register file to store the oscillator coefficients $\{a_k,b_k,n_k\}$, $k=1,
\ldots,1024$, an argument accumulator, a fully unrolled CORDIC module that calculates one pair of sine and cosine values per clock cycle, and a sample accumulator.
The configuration registers and register files of the BFO are addressed using a memory map and are configured via SPI. Each SPI command uses $48$\,bits: $16$\,bits for the address and $32$\,bits for the data. The SPI interface runs at a baud rate of $13$\,Mb/s, with which a new waveform with $K=1024$ partials could be reprogrammed in less than $12$\,ms---nevertheless, this task is currently completed in $315$\,ms as our  firmware does not yet fully exploit the \mcu's capabilities.

To generate samples $x[\ell]$ as in~\fref{eq:fourierseries} with $K=1024$ partials  at a sampling rate $\fs=96$\,kHz, each oscillator operates at a clock frequency of $K\fs=98.304$\,MHz and calculates one sample of one partial every clock cycle.
The argument accumulator uses the base frequency $f/\fs$, stored in a configuration register, to track the base-frequency argument $\theta$, which is updated every $K=1024$ clock cycles as in \fref{eq:argumentu}.
In each of the $K$ clock cycles that the base-frequency argument $\theta$ remains fixed, one word of the register file is read to obtain $\{a_k,b_k,n_k\}$.
Then, the argument $\theta_k$ is computed by multiplying $\theta$ and $n_k$ as in \fref{eq:argumentutok}. 
With $\{a_k,b_k,\theta_k\}$ available, the CORDIC computes one partial in \fref{eq:fourierseries}, which is then accumulated to the current sample $x[\ell]$ if aliasing will not occur, i.e., if $fn_k<\fs/2$ is met. 
After $K=1024$ clock cycles, the four samples generated by the four oscillators are added to create one voice sample, and the L and R voice samples are then mixed using a programmable $2\times 2$ matrix to support, e.g., stereo and mono processing.
The two mixed samples are then streamed to the DACs via I2S.

\fref{fig:bfo_die} shows a micrograph of the $3$\,mm$^2$ BFO ASIC, which was fabricated in TSMC $65$\,nm LP CMOS technology. At the nominal $1.2$\,V core supply and room temperature, the ASIC achieves a maximum measured clock frequency of $154$\,MHz, exceeding the required $98.304$\,MHz, while consuming only $178$\,mW.

\subsection{Bells and Whistles}
\label{sec:bellsandwhistles}

The BFO includes a number of additional features, which further improve its versatility and flexibility. These features are as follows. 

\paragraph{Subwave Mixing} 
Per default, each oscillator accumulates $1024$ partials together to compute one sample $x[\ell]$ as in \fref{eq:fourierseries}.
We also support a \emph{subwave mixing} mode, in which the $1024$ partials are split into up to four disjoint groups (or \emph{subwaves}) that are accumulated independently.
By doing so, a single oscillator can blend various wavetable sounds with the same base frequency, each one with fewer partials; e.g., an oscillator can generate four subwaves of $256$ partials each instead of a single waveform with $1024$ partials.
Each subwave $x_i[\ell]$, $i=1,\ldots,4$, has a corresponding weight $v_i$, so that the output of one oscillator is $y[\ell]=\sum_{i=1}^{4} v_i x_i[\ell]$.
Thus, with the four oscillators per voice, subwave mixing can arbitrarily blend up to $16$ different wavetables in a single voice.

\paragraph{Aliasing Control} 
Per default, only partials for which ${f}n_k/\fs  < 0.5 $ holds are accumulated; see~\fref{eq:nyquistset}.
The BFO includes a mode that accumulates partials for which the following condition is met
\begin{align}
f_\text{HP} \leq {f}n_k/{\fs} < f_\text{LP},
\end{align}
which realizes an ideal band-pass filter with the lower and upper cutoff frequencies $f_\text{HP}$ and $f_\text{LP}$, respectively. These frequencies can be configured in the range $[0,1)$; the default values are $f_\text{HP} = 0$ and $f_\text{LP} = 0.5$ (no aliasing allowed).
Having these frequencies programmable can be used, e.g., to intentionally allow for aliasing (by setting $f_\text{LP} > 0.5$), or to apply low- and high-pass filtering. 

\paragraph{Bit- and Rate-Crusher}
Each oscillator output includes (optional) \emph{bit-crusher} and \emph{rate-crusher} distortion effects.
The bit-crusher forces certain bits of the output sample~$y[\ell]$ to zero; the zero bits are determined by a programmable bit-mask.
This feature can be used to reduce the bit-resolution by zeroing a certain number of least-significant bits; other, more complicated masking patterns are also possible. 
The rate-crusher is a simple sample-and-hold sub-sampling circuit that lowers the rate at which the output samples~$y[\ell]$ are updated.
The sub-sampling rate is set in the range $(0, \fs)$ and is stored in a configuration register.
This feature can be used to emulate lower sampling rates or, for example, cause intriguing aliasing artifacts that depend on the base frequency $f$. 
 
 \paragraph{PDM Output}
As an alternative to the I2S interface, each BFO ASIC also includes two pulse density modulation (PDM) outputs. 
The 1-bit PDM signal is generated from a digital first-order sigma-delta modulator running at an oversampling rate of $1024$, which corresponds to the BFO's clock frequency. 
This output could be useful for cost-sensitive applications (as no DACs are required), but at reduced sound quality; see \fref{sec:signalquality} for measurement results. 
Note that the PDM outputs are not used in the \pmsynth prototype.

\paragraph{Clipping Indicators}
At certain stages in the digital signal processing path, sample values must be clipped to a certain maximum range to reduce their word lengths. This occurs at the output of each oscillator and after the mixer stage. 
Thus, we included a number of clipping indicator outputs that are raised if clipping occurred; these are tied to LEDs on the \pmsynth prototype. 

\subsection{Comparison}

\begin{figure}[tp]
\centerline{\includegraphics[width=0.95\columnwidth]{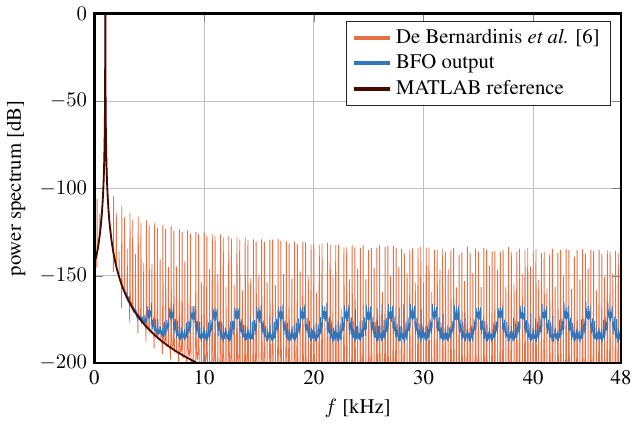}}
\caption{\label{fig:spectrum_osc}{\it Power spectrum comparison for a $1$\,kHz sine wave at a sampling rate of $\fs=96$\,kHz between a MATLAB floating-point reference, the BFO output, and the method described in~\cite{debernardinis99}.}}
\end{figure}

To the best of our knowledge, only the ASIC design reported in~\cite{debernardinis99} is comparable to our BFO implementation.
The design in~\cite{debernardinis99} utilizes a marginally stable infinite-impulse response (IIR) filter to recursively generate samples of a single sinusoid only with the help of a multiply-accumulate unit. 
While the algorithm itself is competitive to our approach in complexity per generated sample, the recursive calculations together with fixed-point  arithmetic suffer from error propagation, particularly at low frequencies (in the order of tens of Hz). To mitigate this issue, the recursion must be restarted periodically (the authors recommend restarting every $128$ samples), which requires one to retrieve two sinusoids from two consecutive sample instants from a memory or a CORDIC. The design in \cite{debernardinis99} assumes that these initial sine values are generated externally. Moreover, their circuit assumes that the magnitudes and phases are stored externally and streamed into the ASIC. Thus, their design would require additional (external) logic and memory, whereas our BFO ASIC is fully self-contained.

A direct comparison between the hardware implementation characteristics of our BFO ASIC and the design in \cite{debernardinis99} is challenging due to missing details. 
Nonetheless, we re-implemented their method in MATLAB using the fixed-point parameters of~\cite{debernardinis99} and compared it to a double-precision floating-point MATLAB reference and the BFO output for a $1$\,kHz sine at $\fs=96$\,kHz. 
Our simulations reveal that the method in~\cite{debernardinis99} achieves a total harmonic distortion plus noise (THD+N) of $-94.27$\,dB, which is $42.7$\,dB worse than what is achieved by our BFO ASIC (see also~\fref{tbl:thdn}).
The corresponding power spectra\footnote{We analyze $960$\,k samples ($10$\,s) using Welch's method with a Hann window, a $2^{14}$-point FFT, $50$\% overlap, and a normalized peak value of $1$.} are shown in \fref{fig:spectrum_osc} and it is evident that our CORDIC-based approach generates sine waves with significantly higher purity than the ASIC design reported in~\cite{debernardinis99}.

\section{Measurement Results}
\label{sec:measurements}
In order to quantify the performance of the \pmsynth prototype, we now present a number of measurement results.

\subsection{Signal Quality}
\label{sec:signalquality}

We first assess the quality of a single partial at different stages of the instrument:
(i)~{the BFO output} (I2S output, digital domain),
(ii)~{the DAC output} (after reconstruction filter, analog domain), 
and (iii)~{the \pmsynth output} (line-level output, analog domain).
We generate a $1$\,kHz sine wave with  $-6$\,dBFS amplitude (with respect to the I2S interface) in the BFO at a sampling rate $\fs$ of~$96$\,kHz.
The digital BFO output is captured using a logic analyzer to extract raw I2S data; the analog signals are captured using a Focusrite Scarlett 18i20 (3rd gen.) audio interface~\cite{Scarlett18i20} with a sampling rate of $96$\,kHz. 
\fref{fig:spectrum_synth} shows the power spectrum of the three signals. 
The corresponding THD+N values are reported in \fref{tbl:thdn}. 
We see that the test signal (sinusoidal) at the BFO output has extremely high purity and the quality is essentially limited by the DAC. We can also see that analog processing through the VCF and VCA circuitry further reduces the THD+N, which is not unexpected. 

We also measured the filtered PDM output, which consists of a passive second-order low-pass filter with a $-3$\,dB frequency of $31$\,kHz. 
This output achieves a THD+N of $-75$\,dB, which is $13$\,dB higher than that of the DAC output (see~\fref{tbl:thdn}); the associated spectrum is shown in \fref{fig:dac_vs_pdm}. 
Clearly, the PDM output is inferior to the DAC output, but would enable the use of our BFO ASICs with less expensive external circuitry. 

\begin{figure}[t]
\centering
\subfigure[\it Power spectra at different stages in the \pmsynth.]{\includegraphics[width=0.95\columnwidth]{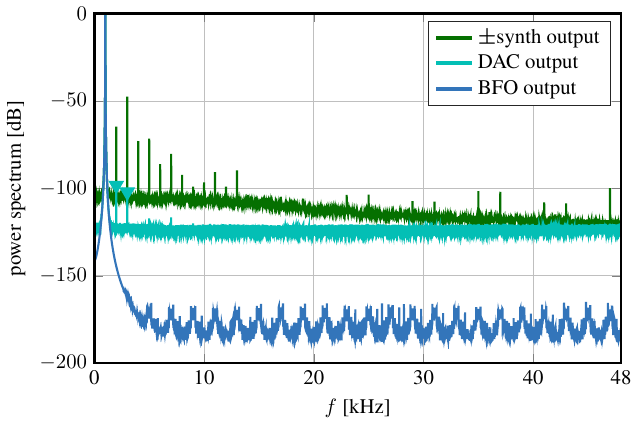}\label{fig:spectrum_synth}}
\subfigure[\it Power spectra of reference, DAC, and filtered PDM outputs.]{\includegraphics[width=0.95\columnwidth]{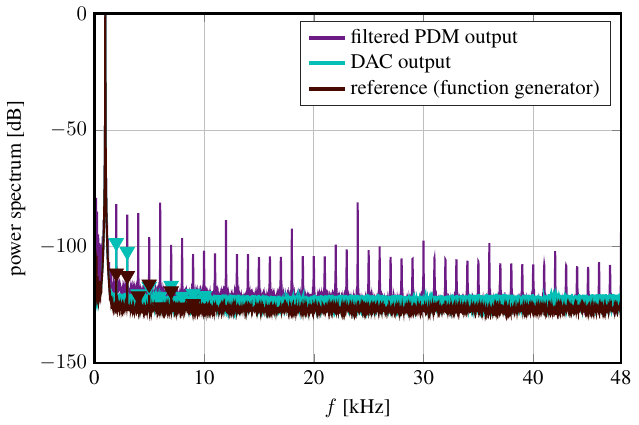}\label{fig:dac_vs_pdm}}
\caption{{\it Measured power spectra (in decibels) for a $1$\,kHz sine wave at a sampling rate of $\fs=96$\,kHz.}}
\end{figure}

\begin{remark}
The audio interface~\cite{Scarlett18i20} specifies a THD+N below $0.002\%$ ($\approx -94$\,dB) for the line inputs. 
As a reference, we measured a $1$\,kHz sine wave generated with an SRS DS360 ultra-low distortion function generator~\cite{SRSds360}, which resulted in a THD+N or $-89.5$\,dB (cf.~\fref{tbl:thdn}); the associated spectrum is shown in \fref{fig:dac_vs_pdm}. 
Since this THD+N result is close to that of the DAC output, our measurements are likely affected by the audio interface. 
\end{remark}

\begin{table}[tp]
\caption{\itshape THD+N measurements for a $f=1$\,kHz sine.}
\label{tbl:thdn}
\centering
\renewcommand{\arraystretch}{1.05}
\begin{tabular}{@{}lc@{}}
	\toprule
 	Measurement& THD+N [dB] \\
	\midrule
	\pmsynth output (analog) & $-51.0$ \\
	PDM output (analog) & $-75.0$ \\
	DAC output (analog) & $-88.2$ \\	
	BFO output (digital) & $-137.0$ \\
	\midrule
	reference: function generator (analog) & $-89.5$ \\
	\bottomrule
\end{tabular}
\end{table}

\fref{tbl:waves} shows the signal-to-noise and distortion ratio (SINAD) between a MATLAB floating-point model and the digital BFO output for different waveforms generated at a base frequency of \mbox{$f=20$\,Hz}.
Most waveforms achieve a SINAD exceeding $131$\,dB except for the pulse waveform, which yields $109.4$\,dB. The reason is that since the pulse waveform has the $a_k$-coefficients of all $1024$ partials set to the same value, that value has to be reduced substantially to avoid clipping. Thus, the signal power of this waveform is much lower compared to the other waveforms, which results in lower SINAD.

\begin{table}[tp]
\caption{\itshape SINAD for different waveforms at $f=20$\,Hz.}
\label{tbl:waves}
\centering
\renewcommand{\arraystretch}{1.05}
\begin{tabular}{@{}lc@{}}
	\toprule
 	Waveform& SINAD [dB] \\
	\midrule
	sine & 134.0 \\
	triangle & $133.3$ \\
	sawtooth & $135.3$ \\
	super-saw & $131.3$ \\
	rect-saw & $131.0$ \\
	pulse & $109.4$ \\
	\bottomrule
\end{tabular}
\end{table}

\subsection{Latency}
As any digitally controlled instrument, the \pmsynth exhibits non-negligible latency between a keystroke (or parameter change) and a change in the synthesizer's output.
To assess the prototype's latency, we measure the delay between the reception of a note-on MIDI message at the \pmsynth and the change in signal at the line-level output; this ignores external delays, e.g., caused by the MIDI keyboard.
The measurements are repeated $34$ times using an oscilloscope probing two signals: (i) the opto-coupler's output in the MIDI receiver circuit and (ii) the  hot signal of the line-output.
Our measurements show a mean latency of $2.09$\,ms (minimum $1.60$\,ms; maximum $2.76$\,ms; standard deviation $0.33$\,ms). 
The latency is mainly caused by the \mcu firmware, i.e., processing the UART data (MIDI receiver), computing new control signals, transmitting parameter data over SPI, etc.

\subsection{Power Consumption}
The \pmsynth's power consumption is measured for two cases: 
(i)~idle mode (default state after power-up) and (ii)~playback (all voices playing).
The supply current is measured at $499$\,mA during idle mode and up to $522$\,mA during playback. This corresponds to a power consumption of approximately $12$\,W to $13$\,W. 
Since there are many factors influencing the instrument's power (e.g., filter resonance, load at audio outputs, etc.), these measurements should be taken with a grain of salt.

\section{Limitations and Future Work}

The \pmsynth prototype, in its current form, has a number of limitations, which we now summarize.
First, the BFO ASICs are currently unable to perform oscillator synchronization, which is a direct consequence of the additive-synthesis approach discussed in~\fref{sec:oscillator}.
To mitigate this limitation, one could load in oscillator parameters $\{a_k,b_k,n_k\}_{k=1}^K$ of a synchronized oscillator, but this approach is limited by the rate at which all of the parameters can be rewritten (see \fref{sec:VLSIarchitecture}). Also, such a workaround might no longer be aliasing-free. Developing hardware-friendly solutions to implement true oscillator synchronization without aliasing, e.g., inspired by the works of \cite{brandt2001hard,la2022general}, is part of ongoing work.
Second, the BFO ASICs do not provide hardware support for true additive synthesis with separate envelopes per partial. Such functionality could readily be implemented in hardware, but comes at the cost of additional memory and logic to store and update the envelope parameters. A potential compromise would be to use the linear updates put forward in \cite{debernardinis99}. 
Third, the BFO does not provide a noise generator. We are planning to include such missing functionality in a future version. 
Fourth, the quality of the BFO is currently limited by the used $24$-bit DAC, and our audio interface may affect our measurements; in the future, we will use a better DAC to fully exploit the BFO's high purity and also use better measurement equipment.
Fifth, we are currently using an off-the-shelf external keyboard and controller---developing a dedicated user interface for the \pmsynth would be quite exciting.

\section{Conclusions}

We have shown the implementation details of the \pmsynth, an eight-voice hybrid digital-analog music synthesizer prototype that uses aliasing-free digital oscillators followed by analog filtering and amplification. 
By implementing the oscillators on custom ASICs, we are able to generate a wide variety of waveforms with up to $1024$ freely programmable partials per oscillator, which includes not only classical waveforms of analog synthesizers, but also wavetable sounds or bell-like timbres. 
The  \pmsynth is able to generate a total number of $32\,768$ partials at a sampling rate of $96$\,kHz, and measurement results have demonstrated high fidelity and low latency. 

While a range of commercial hybrid digital-analog synthesizers became available recently, virtually nothing is known about their inner workings.  
In contrast, every implementation detail of our hardware prototype is known and well-documented, and every aspect of the instrument can be modified easily.  
We therefore believe that the \pmsynth will be an excellent research platform for future real-time audio synthesis experiments.

\balance
\bibliographystyle{IEEEbib}

\begin{thebibliography}{10}

\bibitem{stilson_1996}
Timothy Stilson and Julius Smith,
\newblock ``Alias-free digital synthesis of classic analog waveforms,''
\newblock in {\em Proceedings of the International Computer Music Conference
  (ICMC)}, August 1996, pp. 332--335.

\bibitem{reinhardt_direct_1985}
Victor~S. Reinhardt,
\newblock ``Direct digital synthesizers,''
\newblock in {\em Proceedings of the Annual Precise Time and Time Interval
  Systems and Applications Meeting}, December 1985, pp. 345--374.

\bibitem{schanze1995}
Thomas Schanze,
\newblock ``Sinc interpolation of discrete periodic signals,''
\newblock {\em {IEEE} Transactions on Signal Processing}, vol. 43, no. 6, pp.
  1502--1503, June 1995.

\bibitem{moorer1977}
James~A. Moorer,
\newblock ``Signal processing aspects of computer music: A survey,''
\newblock {\em Proceedings of the {IEEE}}, vol. 65, no. 8, pp. 1108--1137,
  August 1977.

\bibitem{alchemy_additive}
{Apple Inc.},
\newblock ``Alchemy additive element controls in {Logic Pro},''
\newblock {\em Online},
\newblock available at
  https://support.apple.com/guide/logicpro/additive-element-controls-lgsi55ccfb06/10.7.5/mac/12.3,
  accessed: April 6th, 2023.

\bibitem{debernardinis99}
Fernando De~Bernardinis, Roberto Roncella, Roberto Saletti, Pierangelo Terreni,
  and Graziano Bertini,
\newblock ``An efficient {VLSI} architecture for real-time additive synthesis
  of musical signals,''
\newblock {\em {IEEE} Transactions on Very Large Scale Integration ({VLSI})
  Systems}, vol. 7, no. 1, pp. 105--110, March 1999.

\bibitem{arturiafreak}
{Arturia},
\newblock ``{Arturia - Freak},''
\newblock {\em Online},
\newblock available at https://www.arturia.com/ranges/freak, accessed: April
  6th, 2023.

\bibitem{prophetx}
{Sequential LLC},
\newblock ``Prophet {X},''
\newblock {\em Online},
\newblock available at https://www.sequential.com/product/prophet-x, accessed:
  April 6th, 2023.

\bibitem{udosuper6}
{UDO Audio},
\newblock ``The super 6,''
\newblock {\em Online},
\newblock available at https://www.udo-audio.com, accessed: April 6th, 2023.

\bibitem{waldorfquantum}
{Waldorf Music},
\newblock ``Quantum,''
\newblock {\em Online},
\newblock available at https://waldorfmusic.com/quantum-en, accessed: April
  6th, 2023.

\bibitem{DatShCS4350}
{Cirrus Logic},
\newblock {\em {192-kHz Stereo DAC with Integrated PLL}}, August 2018.

\bibitem{DatShSSI2144}
{Sound Semiconductor Inc.},
\newblock {\em {SSI2144: FATKEYS\texttrademark~Four-Pole Voltage Controlled
  Filter}}, January 2018,
\newblock Rev. 3.0.

\bibitem{DatShSSI2162}
{Sound Semiconductor Inc.},
\newblock {\em {SSI2162: FATKEYS\texttrademark~Dual Voltage Controlled
  Amplifier}}, June 2020,
\newblock Rev. 1.1.

\bibitem{volder}
Jack~E. Volder,
\newblock ``The {CORDIC} trigonometric computing technique,''
\newblock vol. EC-8, no. 3, pp. 330--334, September 1959.

\bibitem{parhami}
Behrooz Parhami,
\newblock {\em Computer Arithmetic: Algorithms and Hardware Designs},
\newblock Oxford University Press, Inc., USA, 1999.

\bibitem{Scarlett18i20}
{Focusrite},
\newblock ``Scarlett 18i20, 3rd generation,''
\newblock {\em Online},
\newblock available at
  https://focusrite.com/en/usb-audio-interface/scarlett/scarlett-18i20,
  section: ``Specifications in Detail,'' accessed: April 6th, 2023.

\bibitem{SRSds360}
{Stanford Research Systems},
\newblock ``{DS360} --- ultra-low distortion function generator,''
\newblock {\em Online},
\newblock available at
  https://www.thinksrs.com/downloads/pdfs/catalog/DS360c.pdf, accessed: April
  6th, 2023.

\bibitem{brandt2001hard}
Eli Brandt,
\newblock ``Hard sync without aliasing,''
\newblock in {\em Proceedings of the International Computer Music Conference
  (ICMC)}, September 2001.

\bibitem{la2022general}
Pier~Paolo La~Pastina and Stefano D'Angelo,
\newblock ``A general antialising method for sine hard sync,''
\newblock in {\em Proceedings of the International Conference on Digital Audio
  Effects (DAFx)}, September 2022, pp. 109--114.

\end{thebibliography}

\end{document}